\documentclass[aps,pra,twocolumn,superscriptaddress,longbibliography]{revtex4-2}
\usepackage{amsmath}
\usepackage{amssymb}
\usepackage{soul}
\usepackage{amsthm}
\usepackage{xparse,physics}
\usepackage{tikz}
\usepackage{lipsum}

\newtheorem{thm}{Theorem}
\newtheorem{lem}{Lemma}

\newcommand{\paren}[1]{\qty(#1)}
\newcommand{\cH}{\mathcal{H}}

\begin{document}
\renewcommand{\selectlanguage}[1]{}

\title{A quantum algorithm for modular flow}

\author{Ian T. Lim}
\email{ilim@arizona.edu}
\affiliation{Department of Physics, University of Arizona, Tucson, AZ 85721, USA}
\author{Isaac H. Kim}
\affiliation{Department of Computer Science, UC Davis, Davis, CA 95616, USA}

\begin{abstract}
    Entanglement is a defining property of quantum systems. For a subsystem of a larger quantum system, we can formally define an operator known as the modular Hamiltonian, which is closely linked to the entanglement properties of that subsystem, and a corresponding operator flow called the modular flow. Algorithms for estimating the von Neumann entropy, the best-known entanglement measure, are well-established, but no equivalent procedures have been previously described for the modular flow. In this work, we briefly review the quantum singular value transform (QSVT) framework for developing quantum algorithms, and then discuss the implementation of modular flow within this framework. We conclude by describing select applications of our modular flow algorithm, such as extracting the chiral central charge of a topologically ordered system and simulating the experience of the bulk observer in holography. We also prove a query complexity lower bound for modular flow, which shows that our method cannot be improved further substantially.
\end{abstract}
\maketitle

\section{Introduction}

Over the past few decades, studies of entanglement in quantum many-body systems
have proven to be a fruitful approach to uncover their universal properties. In condensed matter physics, entanglement provided smoking-gun signatures of topological phases 
~\cite{kitaev2006topological,levin2006detecting}. In the context of the AdS/CFT correspondence~\cite{maldacena1999large}, the Ryu-Takayanagi formula revealed a geometric interpretation of entanglement~\cite{ryu_holographic_2006}. Entanglement has also played an important role in elucidating fundamental structural properties of quantum many-body systems, such as the monotonicity of the renormalization group flow~\cite{casini2004finite,casini2012renormalization} and topological phases~\cite{shi2020fusion,kim2023universal}.

The central object that underpins all these studies is the von Neumann entropy. However, von Neumann entropy is just a single number, and the underlying density matrix may contain additional information not revealed by the entropy alone. One way to recover such information is to use the \emph{modular flow}. Modular flow is a time evolution generated by the modular Hamiltonian, which is defined as the logarithm of a given density matrix. Interestingly, modular flow sometimes reveal extra information that is not encoded in the entanglement measures such as von Neumann entropy. For instance, modular flow can reveal the chiral central charge of the edge associated with a topologically ordered 2+1D medium~\cite{kim2022chiral,kim_modular_2022,li2024strictarealawentanglement,fan2022entanglement,zou2022modular}. Modular flow was also studied in the context of the AdS/CFT correspondence~\cite{faulkner_bulk_2018,jafferis_relative_2015}. In particular, it was recently proposed that the modular flow on the CFT side can generate a time evolution of an observer embedded in the AdS~\cite{jafferis_inside_2021,de_boer_black_2022}.

An ability to experimentally implement modular flow may be useful for these applications. For instance, similar to how entanglement-based probe was used to detect topological phases in experiments~\cite{satzinger2021realizing,verresen2021prediction}, generating a modular flow can reveal other universal properties of the underlying phase such as the chiral central charge. In the AdS/CFT correspondence, modular flow gives a testable prediction on the experience of the bulk observer~\cite{jafferis_inside_2021,de_boer_black_2022}. A concrete method to apply the modular flow may lead to an experiment that can verify or disprove this proposal.

However, to the best of our knowledge, there is no known method to implement the modular flow. The main purpose of our work is to propose a method with a provable guarantee, together with the query complexity lower bound that shows the (near-)optimality of our method.  

Our main result is the construction of a polynomial approximation to the logarithm which is suitable for implementation in the quantum singular value transform (QSVT) framework~\cite{gilyen_quantum_2019}. By applying this polynomial to a state $\rho$, we can approximate its modular Hamiltonian $K := -\log\rho$ and use the modular Hamiltonian as input to an established Hamiltonian simulation algorithm. This provides an explicit prescription for computing the modular flow of an operator using QSVT.

The technical result is summarized in the following theorem:
\begin{thm}\label{thm:modularflowcomplexity}
    For a block-encoded state $\rho$ with smallest non-zero eigenvalue $1/\kappa$, we can $\epsilon$-approximate the modular flow $O(t)$ of an operator $O$ with respect to $\rho$ for a modular time $t$, i.e. implement an operator $\tilde O(t)$ such that
    \begin{equation}
        \norm{\tilde O(t) - \rho^{-it}O\rho^{it}} < \epsilon,
    \end{equation}
    with a query complexity of
    \begin{equation}
        \tilde O\paren{ \kappa \abs{t}\log \kappa \log \frac{1}{\epsilon} }
    \end{equation}
    queries to the encoding of $\rho$, where $\tilde O$ indicates the suppression of sub-logarithmic factors.
\end{thm}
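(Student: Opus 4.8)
The plan is to reduce the problem to two standard primitives of the QSVT framework — building a block-encoding of the modular Hamiltonian $K = -\log\rho$ from the given block-encoding of $\rho$, and Hamiltonian simulation of $K$ — and then to conjugate $O$. Concretely, given a circuit for $O$ (or its block-encoding), I would implement $\tilde O(t) = \tilde U(t)^\dagger\, O\, \tilde U(t)$ where $\tilde U(t)$ approximates $\rho^{it} = e^{-iKt}$; since $\norm{V^\dagger O V - W^\dagger O W} \le 2\norm{O}\,\norm{V - W}$ for near-unitary $V,W$, it suffices to guarantee $\norm{\tilde U(t) - \rho^{it}} < \epsilon/(2\norm{O})$, so the conjugation only affects logarithmic factors and both $\rho^{it}$ and $\rho^{-it}$ come from the same block-encoding of $K$.

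The first and central step is the polynomial approximation of the logarithm. I would produce a parity-definite polynomial $P$ with $\abs{P(x)} \le 1$ on $[-1,1]$ and $\abs{P(x) + \gamma\log x} \le \gamma\delta$ on $[1/\kappa,1]$, where $\gamma$ is a normalization keeping $\gamma\log\kappa$ bounded away from $1$ and $P$ is additionally kept small (free of Runge-type blow-up) on $[0,1/\kappa)$, the region carrying no spectral weight of $\rho$; since $\rho$ may be rank-deficient it is convenient to also force $P$ to vanish near $x=0$, which amounts to setting the implemented $K$ to a finite value on $\ker\rho$ and is consistent because there is a gap $1/\kappa$ between $0$ and the smallest non-zero eigenvalue. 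Such a $P$ can be built by the usual recipe of damping a truncated series (or Chebyshev expansion) for $\log$ by a rectangle polynomial, with degree $d = \tilde O\paren{\kappa\,\mathrm{polylog}(1/\delta)}$ and subnormalization $1/\gamma$ polynomial in $\kappa$. Applying $P$ to the block-encoding of $\rho$ by QSVT then yields, with $d$ queries, a block-encoding of $\gamma K$ up to error $O(\gamma\delta)$.

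For the second step I would feed the block-encoding of $\gamma K$ into optimal Hamiltonian simulation (QSP/Low--Chuang): it implements $e^{-i(\gamma K)(t/\gamma)} = e^{-iKt}$ to error $\epsilon'$ using $\tilde O\paren{\gamma^{-1}\abs{t} + \log(1/\epsilon')}$ queries to that block-encoding. Multiplying by the cost $d$ of each such query gives $\tilde O\paren{\kappa^2\abs{t}\,\mathrm{polylog}(\kappa/\epsilon)}$ queries to $\rho$; one factor of $\kappa$ comes from resolving the smallest eigenvalue $1/\kappa$ in the log-polynomial, and the other from the corresponding subnormalization $1/\gamma$ of the $K$-block-encoding (equivalently the rescaled evolution time $t/\gamma$). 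The error budget is closed by choosing $\delta \lesssim \epsilon/\abs{t}$ — so that the spectral error $O(\delta)$ in $K$ is amplified by at most $\abs{t}$ under $e^{-iKt}$ via $\norm{e^{-iAt}-e^{-iBt}}\le \abs{t}\norm{A-B}$ — and $\epsilon' \lesssim \epsilon$; propagating these into the degree bound produces the $\log(\kappa^2/\epsilon)$ in the statement.

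The main obstacle I anticipate is precisely the construction of $P$: $-\log x$ is steep and diverges as $x\to 0^+$, so keeping a polynomial approximant bounded by $1$ on all of $[-1,1]$ (a hard constraint for QSVT) while matching $-\gamma\log x$ down to $x = 1/\kappa$ and keeping the degree small forces a delicate choice of damping and normalization, and it is this tradeoff that pins down the two factors of $\kappa$ rather than a smaller power. A secondary point requiring care is the kernel of $\rho$ and making the output operator well-defined there, which the vanishing condition on $P$ near the origin together with the spectral gap is meant to handle; and one should check that the subnormalizations accumulated from the purification/block-encoding of $\rho$, from $P$, and from Hamiltonian simulation all compose correctly into the quoted bound.
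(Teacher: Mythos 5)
Your overall architecture is the same as the paper's: approximate $\tfrac12\log(x^2)$ by a bounded, parity-definite polynomial (a truncated log expansion damped by a rectangle polynomial and rescaled), apply it via QSVT to the block-encoding of $\rho$ to obtain a subnormalized block-encoding of $K=-\log\rho$, feed that into QSVT Hamiltonian simulation to get $\rho^{it}$, and conjugate $O$. Your error budget ($\norm{e^{-iAt}-e^{-iBt}}\le \abs{t}\norm{A-B}$, hence $\delta\lesssim\epsilon/\abs{t}$) and your treatment of $\ker\rho$ via the rectangle damping are sound, and are in fact spelled out more explicitly than in the paper.

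The genuine gap is in the one step that carries all the quantitative content: the degree/normalization tradeoff of the log polynomial, which you assert rather than prove, and with parameters that do not match what the paper establishes. You claim degree $d=\tilde O(\kappa\,\mathrm{polylog}(1/\delta))$ together with a subnormalization $1/\gamma=\mathrm{poly}(\kappa)$, and recover $\kappa^2$ as the product "one $\kappa$ from the degree, one from the subnormalization." The paper's accounting is entirely different: its Lemmas~\ref{lem:exactchebyshev} and~\ref{lem:finitechebyshev} show that the truncated Chebyshev series for $\tfrac12\log(x^2)$ converges near $x=1/\kappa$ only like $(1-2/\kappa^2)^{N}$, forcing degree $N=O\paren{\kappa^2\log\frac{\kappa^2}{\epsilon}}$, while the rescaling factor is only $\log(2\kappa)$ (the natural normalization, since $\abs{\log x}\le\log\kappa$ on the relevant interval), so the Hamiltonian-simulation overhead is $\abs{t}\log\kappa$, not $\abs{t}\,\mathrm{poly}(\kappa)$. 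Your poly-$(\kappa)$ subnormalization is neither forced nor natural, and without it your own parameters (degree $\tilde O(\kappa)$, subnormalization $O(\log\kappa)$) would yield $\tilde O(\kappa\abs{t})$ --- a claim stronger than the theorem that you would then have to actually substantiate with an explicit polynomial construction and error bound. In short, the heart of the proof --- constructing the polynomial and proving its degree as a function of $\kappa$ and $\epsilon$ (the content of Lemma~\ref{lem:finitechebyshev} and the $P^{\mathrm{MH}}_{\epsilon,\kappa}$ construction) --- is missing, and the two factors of $\kappa$ in your final answer arise from an unjustified bookkeeping choice rather than from a proved property of the approximant.
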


We also describe three other relevant extensions. First, we argue that in the context of the Jafferis-Lamprou-Gao proposal relating modular flow for holographic quantum systems to bulk proper time evolution, bulk correlation functions can in principle be evaluated with our modular flow prescription, allowing for experimental verification of the results of Gao and Lamprou \cite{gao_seeing_2022}. Second, we show that our construction for modular flows can be used to estimate the von Neumann entropy of a density matrix. In particular, using the known query complexity lower bound for von Neumann entropy estimation~\cite{chen2025listcomplexityboundsproperty}, we show that the complexity of our algorithm for modular flow cannot be improved substantially. Finally, we show that the modular flow can be used to compute the modular commutator \cite{kim_modular_2022}, an entanglement quantity related to multipartite entanglement on tripartite states.

The remainder of this paper is organized as follows. In Section~\ref{sec:modular_flow_qsvt}, we provide a brief review of QSVT and explain our main algorithm. We discuss applications of this result in Section~\ref{sec:applications} and end with a conclusion in Section~\ref{sec:conclusions}.

\section{Modular flow with QSVT}
\label{sec:modular_flow_qsvt}

\subsection{Review of QSVT}
A key tool we make use of in this work is the quantum singular value transform (QSVT), an algorithmic framework for developing quantum algorithms which allows us to polynomially transform the singular values of a suitably encoded operator \cite{gilyen_quantum_2019,martyn_grand_2021}. The QSVT is based on the principle of \emph{qubitization}, where a bigger Hilbert space is divided into two orthogonal subspaces (analogous to the $\ket{0},\ket{1}$ states of a qubit) and an operator of interest is \emph{block-encoded} to act nontrivially upon the subspace picked out by the $\ket{0}\bra{0}$ projector.

The QSVT generalizes the principle of quantum signal processing for a single qubit \cite{low_optimal_2017,low_hamiltonian_2019}, in which a fixed ``signal'' described by a parameter $a$ and implemented as a black-box operator $W(a)$ (e.g. an $x$-rotation by some fixed angle related to $a$) is ``processed'' by applying a particular sequence of operators $U_{\vec \phi}$ involving $W(a)$ and some ``signal processing'' rotations of our choice (typically taken to be $z$-rotations) to an initial state $\ket{0}$.%
    \footnote{The exact definition is given in \cite{martyn_grand_2021}, for example.}
The signal parameter $a$ is initially unknown to us, but we can apply the operator $W(a)$ at will.

At the end of this process, the resulting state has the special property that the matrix element $\bra{0}U_{\vec \phi}\ket{0}$ is a polynomial function of the parameter $a$. That is, if we prepare a qubit in the $\ket{0}$ state, apply the signal processing sequence to it, and measure in the computational basis, the probability of measuring the state to be $\ket{0}$ at the end is related to the signal parameter $a$ in a simple (polynomial) way.

This procedure would be of limited use if we could not control or easily determine the signal processing polynomial $P(a)$, but in fact for a given polynomial $P(a)$ (subject to some reasonable constraints), we can \emph{always} construct a corresponding signal processing sequence which implements this polynomial so that $P(a)=\bra{0}U_{\vec \phi}\ket{0}$. Thus the ``signal'' $a$ is processed by a polynomial of our choice, and we can for example design this polynomial to let us distinguish whether $a$ is above or below some chosen threshold by simply measuring the output state in the computational basis \cite{gilyen_quantum_2019}.

In the context of QSVT, the polynomial transform of the single parameter $a$ is generalized to a polynomial transform of the singular values of a block-encoded%
    \footnote{Effectively, block encoding the operator $O$ on a Hilbert space $\mathcal{H}_A$ means the following: introduce an index qubit $R$ with Hilbert space $\mathcal{H}_R$. A \emph{block encoding of $O$} is an operator which acts on $\mathcal{H}_R \otimes \mathcal{H}_A$ as $\ket{0}\bra{0}\otimes O$. Block encoding allows us to embed potentially non-unitary operators into a block of a unitary operator acting on a larger Hilbert space.}
operator $A$. In other words, given
\begin{equation}
    A=\sum_{k=1}^r \sigma_k \ket{w_k}\bra{v_k},
\end{equation}
with singular values $\qty{\sigma_k}$ and left- and right-singular vectors $\qty{\ket{w_k}},\qty{\bra{v_k}}$, QSVT implements the transformed version
\begin{equation}
    \mathrm{Poly}(A)=\sum_{k=1}^r P(\sigma_k) \ket{w_k}\bra{v_k}
\end{equation}
This means that if we can reframe a function of an operator (e.g. $e^{iHt}$ for a Hamiltonian $H$ or $A^{-1}$ for an operator $A$) in terms of a polynomial which approximates that function well over the operator's range of eigenvalues, then we can implement that function using QSVT. 

Moreover, up to constant factors, the degree of the polynomial is equal to the number of calls to the block-encoded operator being processed, so constructing the polynomial immediately yields the query complexity of the resulting procedure. QSVT is known to give state-of-the-art asymptotic scaling for various quantum algorithms, and many examples of implementing algorithms with QSVT can be found in \cite{martyn_grand_2021}.

\subsection{Modular flow with approximate log}\label{ssec:modularflowwithlog}

As discussed in the previous subsection, the quantum singular value transform allows us to apply a polynomial of our choice to the singular values of a block-encoded operator. In other words, to implement the operator $K_\rho=-\log \rho$ given a block encoding of $\rho$, we need to construct an appropriate polynomial approximation of the logarithm, subject to a few reasonable conditions.

We begin by invoking a result of Gily\'en and Li \cite{gilyen_distributional_2020}, which provides an appropriate polynomial approximation of the logarithm for use with QSVT. We review their result here before applying Hamiltonian simulation to the output of this subroutine to implement the modular flow.

In Lemma 11 of \cite{gilyen_distributional_2020}, Gily\'en and Li establish that there exists an efficiently computable polynomial $\tilde S$ such that for $\beta\in(0,1],\eta\in  (0,\frac{1}{2}]$, $\forall x\in[\beta,1]$,
\begin{equation}
    \abs{\tilde S(x) - \frac{\ln(1/x)}{2\ln(2/\beta)}} \leq \eta,
\end{equation}
and $\forall x\in[-1,1], -1\leq \tilde S(x)=\tilde S(-x)\leq 1$ ($\tilde S$ is of even parity and bounded between $-1$ and $1$), and
\begin{equation}
    \mathrm{deg}(\tilde S)=\mathcal{O}\qty(\frac{1}{\beta}\log(\frac{1}{\eta})).
\end{equation}

Suppose that our state $\rho$ has condition number $\kappa$ (\emph{i.e.}, smallest eigenvalue $1/\kappa$), and we wish to approximate the logarithm to additive error $\epsilon$. Then we must choose $\beta=1/\kappa$ and $\eta=\frac{\epsilon}{2\ln (2\kappa)}$, and the resulting polynomial is of degree
\begin{equation}
    \mathrm{deg}(\tilde S_{\kappa,\epsilon})=\mathcal{O}\qty(\kappa\log(\frac{\log\kappa}{\epsilon})).
\end{equation}
This gives the complexity of constructing the modular Hamiltonian.

We next consider the overall complexity of implementing the modular flow of an operator, which we do by running a Hamiltonian simulation algorithm on our approximate log of $\rho$. Given an operator $O$ and a desired time $t$, Hamiltonian simulation prepares the operator $e^{iOt}$ to within some error tolerance $\epsilon$. Applying this routine to $\log \rho$ gives $e^{i t \log \rho} = \rho^{it}$, which is what we need in order to implement the modular flow.

Performing Hamiltonian simulation is a straightforward application of the QSVT framework, where the functions to be applied are polynomial approximations to the sine and cosine. The runtime (query complexity) of Hamiltonian simulation naturally depends on the (modular) time $t$ and the error tolerance $\epsilon$. As discussed in \cite{martyn_grand_2021}, given a Hamiltonian $\cH$, a desired time $t$, an error tolerance $\epsilon$, and an $\alpha \geq \norm{\cH}$, Hamiltonian simulation by QSVT can be implemented with
\begin{widetext}
\begin{equation}
    \Theta \paren{\alpha\abs{t} + \frac{\log(1/\epsilon)}{\log (e+ \frac{\log(1/\epsilon)}{\alpha \abs{t}})}}
    \text{ queries to (the encoding of) }\cH/\alpha.
\end{equation}
\end{widetext}
In our case, since the output of the log subroutine is a rescaled version of $\log \rho$ such that $\alpha=1$, we will need to multiply the time $t$ by this scaling factor, $2\log(2\kappa)$. This suggests that the final runtime will be logarithmic in the condition number $\kappa$, i.e. it runs in
\begin{widetext}
\begin{equation}
    \Theta \paren{\abs{t}\log \kappa + \frac{\log(1/\epsilon)}{\log (e+ \frac{\log(1/\epsilon)}{\abs{t}\log \kappa})}}
    \text{ queries to }\cH,
\end{equation}
\end{widetext}
and in turn $\cH$ is prepared by $O(N)$ queries to the block encoding of $\rho$, i.e.
\begin{equation}
    O\paren{\kappa \log \frac{\log \kappa}{\epsilon}}\text{ queries to the encoding of }\rho.
\end{equation}
It follows that the total number of queries to the encoding of $\rho$ is the product of these two expressions,
\begin{widetext}
\begin{equation}\label{eq:flowquerycomplexity}
    O\paren{\paren{\kappa \log \frac{\log \kappa}{\epsilon}}\paren{\abs{t}\log \kappa + \frac{\log(1/\epsilon)}{\log (e+ \frac{\log(1/\epsilon)}{\abs{t}\log \kappa})}}} \text{ queries to the encoding of }\rho.
\end{equation}
\end{widetext}

Suppressing logarithmic factors, we arrive at the result of Thm.~\ref{thm:modularflowcomplexity}:
\begin{equation}
    \tilde O\paren{ \kappa \abs{t}\log \kappa \log \frac{1}{\epsilon} }
\end{equation}
queries to $\rho$ are required to implement modular flow for modular time $t$ for a state of condition number $\kappa$ to precision $\epsilon$.

\subsection{Modular flow applied to the purification}
\label{subsec:purification}

In many applications of modular flow, one often applies it to a state that purifies the given density matrix. In this case, the $\kappa$ dependence can be removed. We discuss such a bound here. 

Without loss of generality, consider a bipartite pure state $|\psi\rangle$ and denote its reduced density matrix over $A$ as $\rho_A$. We describe an algorithm that prepares $(\rho_A^{it}\otimes I_B) |\psi\rangle$ given access to an oracle that prepares $|\psi\rangle$. Note that by the construction of Ref.~\cite{low_hamiltonian_2019}, an oracle preparing $|\psi\rangle$ can yield a block encoding of $\rho_A$. 

The approach is to simply apply our main result [Theorem~\ref{thm:modularflowcomplexity}] to $|\psi\rangle$ with a judicious choice of $\kappa$. This prepares a state $(\tilde{\rho}_A^{it}\otimes I)|\psi\rangle$, where $\tilde{\rho}_A$ is a non-negative matrix whose eigenvectors are the same as $\rho_A$, but with different eigenvalues. More precisely, consider the spectral decomposition of these matrices:
\begin{equation}
\begin{aligned}
    \rho_A &= \sum_i \lambda_i |i{\rangle\langle} i|, \\
    \tilde{\rho}_A &= \sum_i \tilde{\lambda}_i |i{\rangle \langle} i|.
\end{aligned}
\end{equation}
These eigenvalues satisfy $|\lambda_i - \tilde{\lambda}_i|\leq \epsilon$ if $\lambda_i \geq \kappa^{-1}$.

Therefore, we can estimate the difference between $(\tilde{\rho}_A^{it}\otimes I)|\psi\rangle$ and $(\rho_A^{it}\otimes I)|\psi\rangle$ as
\begin{widetext}
\begin{equation}
\begin{aligned}
    \|(\tilde{\rho}_A^{it}\otimes I)|\psi\rangle - (\rho_A^{it}\otimes I)|\psi\rangle\|&\leq \left(\max_{j: \lambda_j\geq \kappa^{-1}} |\lambda_j^{it} - \tilde{\lambda}_j^{it}|\right) + 2\left(\sum_{j:\lambda_j< \kappa^{-1}} \lambda_j^{\frac{1}{2}} \right) \\
    &\leq |t|\epsilon \kappa  + 2\kappa^{-\frac{1}{2}} d.
\end{aligned}
\end{equation}
\end{widetext}
where $d$ is the dimension of $A$. By choosing $\kappa = \left(\frac{d}{|t|\epsilon}\right)^{2/3}$, we get an upper bound of $3d^{2/3} (|t|\epsilon)^{1/3}$. In order to ensure this is at most $\delta$, we can set $\epsilon = O(\delta^3/(d^2|t|))$, yielding $\kappa = O(d^2/\delta^2)$. Plugging this into Theorem~\ref{thm:modularflowcomplexity}, we obtain a bound that scales linearly with $|t|$ and polynomially with $d$.

\section{Applications}
\label{sec:applications}
In this section, we discuss a few select applications of the modular flow.

\subsection{Modular flow in condensed matter systems}
In~\cite{li2024strictarealawentanglement}, the authors showed that the modular flow can reveal a universal property of some topological phases. For concreteness, consider a gapped ground state of some model in two spatial dimensions. The authors considered a tripartition of a disk [Fig.~\ref{fig:abc}] and studied the entanglement entropy of $BC$, assuming a modular flow $\rho_{AB}^{it}$ is applied to the underlying state. The main result is that the entanglement entropy satisfies the following formula:
\begin{equation}
    S(\rho_{BC}(t)) = S(\rho_{BC}(0)) + \frac{\pi c_-}{3} t, \label{eq:ccc}
\end{equation}
where $\rho_{BC}(t)$ is the reduced density matrix over $BC$ under the modular flow $\rho_{AB}^{it}$.

The important quantity in Eq.~\eqref{eq:ccc} is $c_-$, which is the chiral central charge~\cite{kim2022chiral,kim_modular_2022}. Importantly, its value can be obtained by computing Eq.~\eqref{eq:ccc} for two different values of $t$, a task that can be achieved by applying our modular flow algorithm and then measuring the von Neumann entropy.

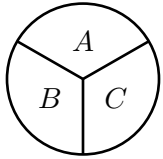
\begin{figure}[h]
    \centering
    \begin{tikzpicture}[line width=1pt]
    \draw[] (0,0) circle (1cm);
    \draw[] (0,0) -- (30:1cm);
    \draw[] (0,0) -- (150:1cm);
    \draw[] (0,0) -- (270:1cm);
    \node[] () at (90:0.5cm) {$A$};
    \node[] () at (210:0.5cm) {$B$};
    \node[] () at (330:0.5cm) {$C$};
        
    \end{tikzpicture}
    \caption{Partition of a disk used in Ref.~\cite{li2024strictarealawentanglement}.}
    \label{fig:abc}
\end{figure}

For this application, it is possible to tame the condition number $\kappa$-dependence of the algorithm; see Section~\ref{subsec:purification} for details.

\subsection{Modular flow in AdS/CFT}
In \cite{jafferis_inside_2021}, Jafferis and Lamprou proposed a framework for ``describing the experience of bulk observers in AdS/CFT.'' Roughly, in the AdS/CFT correspondence, properties of a certain gravitational theory in $d+1$ dimensions (the bulk) are shown to correspond in a precise way to properties of a quantum theory in $d$ dimensions (the boundary) and vice versa. Typically, entanglement quantities on the boundary are found to have geometric duals in the bulk, with the most famous example being the Ryu-Takayanagi formula and its generalizations \cite{ryu_holographic_2006,hubeny_covariant_2007} relating entropy in the boundary to extremal surfaces in the bulk.

The objective of Jafferis and Lamprou was to formulate a boundary description of the proper time evolution of the local fields accessible to a bulk ``observer,'' i.e. a test particle moving in a prepared bulk spacetime geometry with access to fields within a radius $r$ of the particle's position in space.

More precisely, in the setting of \cite{jafferis_inside_2021} a black hole ``probe'' is prepared in the bulk by entangling some subsystem of the boundary CFT with a reference system and then tracing out the reference. In the dual bulk description, this corresponds to introducing a probe black hole near the boundary of AdS. After we trace out the reference system, the CFT is described by a reduced state $\rho$ and a corresponding modular Hamiltonian $K\equiv -\log \rho$.

At some initial time, the observer may have access to some ``atmosphere'' operators nearby in space. As the observer moves in spacetime, the fields she has access to also experience (proper) time evolution along her worldline. The Heisenberg-picture evolution of these atmosphere operators is given by
\begin{equation}
    \phi_H^t(x) = V_H(0,t) \phi^0_H(x) V_H^\dagger(0,t)
\end{equation}
for some boundary operator $V_H(0,t)$.

The key insight of Jafferis and Lamprou is that for certain configurations of the boundary theory, it can be shown that this operator is given by
\begin{equation}
    V_H(0,t) = \exp\bqty{-i \frac{\tau(t)}{2\pi}K(0) + \text{zero modes}},
\end{equation}
where $K$ is the modular Hamiltonian, $\tau(t)$ is proper time as a function of coordinate time, and the zero modes are said to describe the ``precession of the symmetry frame of the observer'' \cite{jafferis_inside_2021}.

In other words, up to this precession effect, the proper time evolution of bulk operators in the Heisenberg picture is implemented by \emph{modular} time evolution with respect to the modular Hamiltonian on the boundary, and so modular time should be identified with proper time in units of the black hole temperature $\beta/2\pi$.

In the follow-up \cite{gao_seeing_2022} by Gao and Lamprou, this construction was made more explicit in the context of the AdS${}_2$/SYK correspondence. In that work, one considers an SYK thermofield double state, dual to an eternal AdS${}_2$ wormhole solution of Jackiw-Teitelboim gravity.

Gao and Lamprou propose to study a correlation function relating an operator $\phi_l$ inserted in the left SYK boundary system at coordinate time $t$ and a probe operator $\phi_r$ inserted in the right system, entangled with a reference system $\cH_\mathrm{ref}$ as previously described. Tracing out the reference yields a reduced state and a corresponding modular Hamiltonian.

Concretely, they compute the anticommutator
\begin{equation}\label{eq:lampros-correlator}
    W(s,t) := \Tr(\rho \qty{\rho^{-is}\psi_r \rho^{is},\psi_l(t)})
\end{equation}
as a function of modular time $s$. The reduced state $\rho$ is a density matrix on the Hilbert space of $\mathrm{SYK}_l \times \mathrm{SYK}_r$ corresponding to tracing out the reference system for a state
\begin{equation}
    \ket{\beta_l,\beta_r; \delta} := \mathcal{Z}^{-1/2} e^{-\frac{\beta_l H_l}{2}}e^{-\frac{\beta_r H_r}{2}} U_{r,\mathrm{ref}}(\delta) \ket{\Phi}_{lr} \ket{v}_\mathrm{ref}
\end{equation}
on $\mathrm{SYK}_l \times \mathrm{SYK}_r \otimes \cH_\mathrm{ref}$. Here, $\ket{v}_\mathrm{ref}$ is the vacuum state of the reference system, while $U_{r,\mathrm{ref}}(\delta)$ both inserts the probe in $\mathrm{SYK}_r$ and also creates the entanglement between the probe and the reference system.

Since the right operator is initially spacelike separated from the left insertion, this correlation function vanishes. However, by performing modular flow (modular time evolution) on the probe operator, one can study how this correlation function changes as the probe evolves in proper time, eventually crossing the bulk light cone of the left operator insertion. In this way, one can probe behind the horizon of the black hole.

While Gao and Lamprou studied this correlator in the context of SYK, our contribution is to show that given an oracle which implements the reduced state of the holographic quantum system with probe, it is possible to implement the modular flow with respect to this reduced state in a reasonable number of calls to the oracle.%
	\footnote{By reasonable we just mean that the scaling in terms of the desired precision and the modular time is not too bad; a precise scaling will appear later.}
In other words, this correlator can be calculated using a quantum computer and used to study the black hole interior or in principle the singularity of the holographic black hole itself \cite{de_boer_black_2022}.

In subsection \ref{ssec:modularflowwithlog}, we showed that we can implement the operator $e^{i t \log \rho} = \rho^{it}$ to a precision $\epsilon$ with a number of queries to $\rho$ that depends on both modular time $t$ and the smallest eigenvalue $1/\kappa$ of the density matrix $\rho$. However, the quantity we wish to compute is not just $\rho^{it}$; it is the correlation function in Eq.~\eqref{eq:lampros-correlator}, reproduced here:
\begin{equation}
    W(s,t) := \Tr(\rho \qty{\rho^{-is}\psi_r \rho^{is},\psi_l(t)}).
\end{equation}

We have assumed oracle access to a block encoding of $\rho$, and in this section we will also assume oracle access to the operator insertions $\psi_r,\psi_l(t)$. Alternately we could assume access to $\psi_l$ and (a block encoding of) the Hamiltonian $H_l$ of the left SYK system, and perform Hamiltonian simulation for a time $t$ with QSVT, just as we did for the modular Hamiltonian.

In any case, access to $\rho, \psi_r,$ and $\psi_l(t)$ ensures that we can implement the operators
\begin{equation}
    \rho \rho^{-is}\psi_r \rho^{is} \psi_l(t), \quad \rho \psi_l(t) \rho^{-is}\psi_r \rho^{is}
\end{equation}
which appear in Eq.~\eqref{eq:lampros-correlator}.

How can we estimate the trace of these operators? We will invoke another fundamental quantum algorithm, \emph{quantum phase estimation} (QPE). In the usual implementation of quantum phase estimation, we are given access to a unitary $U$ and one of its eigenvectors $\ket{k}$, and the output of QPE is an estimate of the phase eigenvalue $e^{i\theta_k}$ corresponding to $\ket{k}$, i.e. $U\ket{k} = e^{i\theta_k}\ket{k}$.

However, there is a reasonable extension of QPE if we are given a mixed state $\rho$ rather than a single eigenvector $\ket{k}$. In this case, QPE yields (to within some precision) a weighted combination of the eigenvalues,
\begin{equation}
    \mathrm{QPE}(\rho,U) = \sum_k e^{i\theta_k} \bra{k} \rho\ket{k} = \Tr(\rho U) \equiv \mathbb{E}_{\rho,k}[e^{i\theta_k}].
\end{equation}
The weights are given by the diagonal elements $\bra{k} \rho\ket{k}$ in the basis of eigenvectors of $U$.

Using QPE, we want to compute
\begin{equation}
    \Tr(\rho \rho^{-is}\psi_r \rho^{is} \psi_l(t)).
\end{equation}
To do so, we observe that while $\psi_r$ and $\psi_l$ are fermionic operators and not themselves unitary, we can perform a Jordan-Wigner transform to map the fermionic operators into linear combinations of a few operators acting on spins, and these operators \emph{are} unitary. Therefore calculating this trace reduces to computing terms of the form
\begin{equation}
    \Tr(\rho \rho^{-is} U_1 \rho^{is} U_2) = \Tr(\rho \tilde U)
\end{equation}
where $\tilde U := \rho^{-is} U_1 \rho^{is} U_2$ is unitary, and thus we can apply QPE to compute these terms.

In summary, if we have oracle access to the implementations of $\rho, \psi_r,$ and $\psi_l(t)$ on spins, then we can apply quantum phase estimation to calculate
\begin{equation}
    W(s,t) := \Tr(\rho \qty{\rho^{-is}\psi_r \rho^{is},\psi_l(t)}).
\end{equation}

\subsection{von Neumann entropy and query complexity lower bound}\label{ssec:entropy_and_complexity}
Computing the von Neumann entropy of an $n$-dimensional density matrix to $\epsilon$-additive precision is known to have a query complexity scaling like $\Omega( (n/\epsilon)^{\frac{1}{2}} + (\log n)/\epsilon)$~\cite{chen2025listcomplexityboundsproperty}.
By reducing the computation of the von Neumann entropy to an application of our modular flow protocol, we can prove a lower bound on the complexity of computing the modular flow.

\begin{widetext}
\begin{thm}
    Using $O\paren{\qty(\frac{(\log \kappa)^2}{\delta \epsilon^2}) \paren{\kappa \log \frac{\log \kappa}{\epsilon}}\paren{\abs{t}\log \kappa + \frac{\log(1/\epsilon)}{\log (e+ \frac{\log(1/\epsilon)}{\abs{t}\log \kappa})}}}$ samples of the state $\rho$, the von Neumann entropy of the state $\rho$ with smallest eigenvalue $1/\kappa$ can be $\epsilon$-approximated with probability at least $1-\delta$.
\end{thm}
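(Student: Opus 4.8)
The plan is to reduce the estimation of the von Neumann entropy to a single invocation of the modular-flow subroutine of Theorem~\ref{thm:modularflowcomplexity} together with quantum phase estimation. Write the spectral decomposition $\rho = \sum_k \lambda_k \ket{k}\bra{k}$; then $S(\rho) = -\Tr(\rho\log\rho) = \sum_k \lambda_k K_k$ with $K_k := -\log\lambda_k$, and the hypothesis on the smallest eigenvalue gives $0 \le K_k \le \log\kappa$. By the construction of Section~\ref{ssec:modularflowwithlog} we can, for a modular time $t$ of our choosing, implement (access to) the unitary $\rho^{it}$ up to operator-norm error $\epsilon'$ at a cost of the query count~\eqref{eq:flowquerycomplexity} to the encoding of $\rho$. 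Because $\rho^{it}$ and $\rho$ share the eigenbasis $\{\ket{k}\}$, the unitary $\rho^{it}$ has eigenvalue $e^{i\phi_k}$ with $\phi_k = -t K_k$ on $\ket{k}$.

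Next I would run quantum phase estimation on $\rho^{it}$ with the mixed state $\rho$ loaded into the target register, exactly as in the $\mathrm{QPE}(\rho,U)$ procedure discussed above for the Gao--Lamprou correlator. A single run returns an estimate $\hat\phi$ of $\phi_k$ drawn with probability $\bra{k}\rho\ket{k} = \lambda_k$, so the single-shot random variable $X := -\hat\phi/t$ satisfies $\mathbb{E}[X] = \sum_k \lambda_k K_k = S(\rho)$ up to (i) the QPE discretization and (ii) the $\epsilon'$-error incurred by using the algorithm's output $\tilde\rho^{it}$ in place of $\rho^{it}$, which perturbs each phase by at most $O(\epsilon')$. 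Choosing $t$ so that $\abs{t}\log\kappa$ is a fixed constant below $2\pi$ (e.g.\ $t = \Theta(1/\log\kappa)$) keeps all the $\phi_k$ inside one $2\pi$-window so that phase estimation is unambiguous, and at the same time bounds the single-shot variance by $\mathrm{Var}(X) \le \sum_k \lambda_k K_k^2 \le (\log\kappa)^2$ up to the same two corrections. Fixing the number of QPE ancilla bits together with $\epsilon'$ so that the combined systematic error in $X$ is at most $\epsilon/2$ determines how many controlled applications of $\rho^{it}$ each run requires; averaging over $M = O\!\big((\log\kappa)^2/(\delta\epsilon^2)\big)$ independent runs then makes the empirical mean lie within $\epsilon/2$ of $S(\rho)$ with probability at least $1-\delta$ by Chebyshev's inequality. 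The total number of samples of $\rho$ used is $M$ times the per-run cost --- (number of controlled $\rho^{it}$ applications) $\times$ the query count~\eqref{eq:flowquerycomplexity} --- and collecting these factors gives the stated complexity; comparing with the $\Omega(n^2\log(1/\delta)/\epsilon)$ lower bound of~\cite{gur_sublinear_2021} then yields the promised lower bound on the cost of any modular-flow prescription.

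I expect the main obstacle to be the control of the QPE discretization in the identity $\mathbb{E}[-\hat\phi/t] = S(\rho)$: phase estimation returns an estimate that is off by $d$ units with probability decaying only like $1/d^2$, and this heavy tail does not disappear under averaging, so one must check that with $\ell = O(\log(\log\kappa/\epsilon))$ ancilla bits --- hence $O(2^{\ell})$ controlled uses of $\rho^{it}$ per run, or a single use of the flow for an appropriately amplified modular time --- both the residual bias and the extra variance it contributes stay $O(\epsilon)$; alternatively one replaces each sample by the median of a constant number of QPE runs, which makes the per-sample error light-tailed at only a logarithmic cost. A secondary point requiring care is the change of computational model: Theorem~\ref{thm:modularflowcomplexity} counts queries to a block encoding of $\rho$, while the benchmark of~\cite{gur_sublinear_2021} is phrased in terms of copies of $\rho$, so one should route through the purification-to-block-encoding reduction (as in the remark of Section~\ref{subsec:purification}, following~\cite{low_hamiltonian_2019}) to make the comparison legitimate. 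The remaining work --- bounding $\sum_k\lambda_k K_k^2$, propagating $\epsilon'$ through the phases, and collecting the polylogarithmic factors from the QPE precision and the Hamiltonian-simulation cost --- is routine.
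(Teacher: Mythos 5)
Your proposal follows essentially the same route as the paper: implement the modular flow unitary with the phases rescaled into a single window (the paper does this by simulating $-\log\rho/\log\kappa$ at $t=\pi$ rather than by shrinking $t$ to $\Theta(1/\log\kappa)$, which is the same maneuver), run phase estimation with the mixed state $\rho$ in the target register so that the single-shot mean is proportional to $S(\rho)$, bound the variance by $(\log\kappa)^2$, take $O((\log\kappa)^2/(\delta\epsilon^2))$ repetitions via a Chebyshev-type tail bound, and multiply by the per-run cost from Theorem~\ref{thm:modularflowcomplexity}. The two caveats you flag --- the heavy tail of the QPE discretization error and the mismatch between block-encoding queries and copies of $\rho$ --- are real but are passed over silently in the paper's own argument, so your added care there is a strengthening rather than a divergence.
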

\end{widetext}

Recall that with QSVT, we prepared an $\epsilon$-approximation of $-\log \rho/\log \kappa$, where $1/\kappa$ was the smallest non-zero eigenvalue of $\rho$. Applying the Hamiltonian simulation procedure of our choice (e.g. the one described in \cite{martyn_grand_2021}) to the modular Hamiltonian, we can prepare the operator
\begin{equation}\label{eq:qpe_unitary}
    U:=\exp(it\frac{\log \rho}{\log \kappa})
\end{equation}
for a real-valued modular time $t$ of our choosing, and we can do so with a query complexity of
\begin{widetext}
$O\paren{\paren{\kappa \log \frac{\log \kappa}{\epsilon}}\paren{\abs{t}\log \kappa + \frac{\log(1/\epsilon)}{\log (e+ \frac{\log(1/\epsilon)}{\abs{t}\log \kappa})}}}$
\end{widetext}
queries to the encoding of $\rho$, as described in Theorem~\ref{thm:modularflowcomplexity}.

The operator $U$ is unitary by construction, so we can apply quantum phase estimation.%
    \footnote{One may worry that our modular flow construction only implements an approximate version of the operator $U$, and thus we should account for this error in the estimate. However, since our approximate operator $\tilde U$ is $\epsilon$-close in operator norm to the exact unitary $U$, it follows that the output of quantum phase estimation will also be $O(\epsilon)$-close to the exact phase.}
By the definition of $\kappa$, the eigenvalues of $-\log \rho$ lie in the range $[0,\log \kappa]$. Let us therefore choose $t=\pi$ so that all the eigenvalues of $U$ lie on the upper half of the unit circle in the complex plane.

Applying quantum phase estimation to an operator $V=\sum_n e^{i\theta_n}\ket{n}\bra{n}$ for a mixed state $\rho=\sum_n \lambda_n \ket{n}\bra{n}$ yields
\begin{equation}
    \mathrm{QPE}(V)_\rho = \sum_n \theta_n \lambda_n,
\end{equation}
i.e. an appropriately weighted average over the phases $\theta_n$. Thus, if we apply quantum phase estimation to $U$ for the state $\rho = \sum_n \lambda_n \ket{n}\bra{n}$, it follows that the mean output will be
\begin{equation}
    \sum_n \lambda_n \paren{\log \lambda_n \frac{\pi}{\log \kappa}} = \frac{\pi}{\log \kappa} S(\rho).
\end{equation}
Because we have rescaled all the phases $\theta_n$ to be $O(1)$, the variance $\sigma^2$ of the QPE samples is upper-bounded by an $O(1)$ constant and by the central limit theorem scales as $1/N$ with $N$ the number of samples from QPE. By standard tail bounds, our estimate of $\frac{\pi}{\log \kappa} S(\rho)$ will lie within $\epsilon/\log \kappa$ of the true value with probability $1-\delta$ where $\delta \equiv \frac{(\log \kappa)^2}{N\epsilon^2}$. In other words, we require
\begin{equation}
    N = O\qty(\frac{(\log \kappa)^2}{\delta \epsilon^2})
\end{equation}
samples from QPE to compute $S(\rho)$ to within $\epsilon$-additive error with probability $1-\delta$.

In turn, each sample from QPE requires us to prepare the unitary $U$ as defined in Eq.~\eqref{eq:qpe_unitary}, which requires
\begin{widetext}
$O\paren{\paren{\kappa \log \frac{\log \kappa}{\epsilon}}\paren{\pi\log \kappa + \frac{\log(1/\epsilon)}{\log (e+ \frac{\log(1/\epsilon)}{\pi\log \kappa})}}}$
\end{widetext}
calls to our block encoding of $\rho$.

We therefore find that QSVT enables us to $\epsilon$-additively estimate the von Neumann entropy with probability $1-\delta$ in
\begin{widetext}
\begin{equation}\label{eq:vn_estimate}
    O\paren{\qty(\frac{(\log \kappa)^2}{\delta \epsilon^2}) \paren{\kappa \log \frac{\log \kappa}{\epsilon}}\paren{\pi\log \kappa + \frac{\log(1/\epsilon)}{\log (e+ \frac{\log(1/\epsilon)}{\pi\log \kappa})}}}
\end{equation}
\end{widetext}
calls of the state $\rho$ with smallest eigenvalue $1/\kappa$.

Having computed the complexity in terms of $\kappa,\delta,$ and $\epsilon$, we now wish to find sufficient conditions on $\kappa$ and the dimension $n$ so that
\begin{equation}
    \abs{\Tr(\rho \log \rho) - \Tr (\rho\log \tilde \rho)} \leq \epsilon.
\end{equation}
This will allow us to compare the complexity of our procedure to known lower bounds on the von Neumann entropy in terms of the dimension of the state. Such a condition is given by the following lemma, which we prove in Appendix~\ref{app:kappa-epsilon_proof}.
\begin{lem}\label{lem:kappa-epsilon} 
    Consider a state $\rho$ of dimension $n$. Let $f^\mathrm{log}_{\kappa,\epsilon}(x)$ be a function that $\epsilon$-approximates $\log(x)$ on the interval $[-1,1]\setminus [-\kappa,\kappa]$ and that is bounded between $-\log 2\kappa$ and $\log 2\kappa$ on the whole interval $[-1,1]$. For 
    \begin{equation}
        \kappa> \frac{n\log n}{\epsilon},
    \end{equation}
    the estimate of the von Neumann entropy given by $-\Tr(\rho f_{\kappa,\epsilon}^{\mathrm{log}}(\rho))$ satisfies the bound
    \begin{equation}
        \abs{\Tr(\rho \log \rho) - \Tr(\rho f_{\kappa,\epsilon}^{\mathrm{log}}(\rho))} \leq \epsilon.
    \end{equation}
\end{lem}

Our strategy is to apply Lemma~\ref{lem:kappa-epsilon} to the function $\tilde S_{\kappa,\epsilon}$ discussed in Section~\ref{ssec:modularflowwithlog}, which satisfies these assumptions after rescaling by $\log 2\kappa$.

We previously showed that estimating the von Neumann entropy with modular flow requires
\begin{equation}
    N = O\qty(\frac{(\log \kappa)^2}{\delta \epsilon^2})
\end{equation}
calls to the modular flow. Using the result of Lemma~\ref{lem:kappa-epsilon} to choose $\kappa>n\log n/\epsilon$, we find that the number of calls to modular flow can be written independent of $\kappa$ as
\begin{equation}
    N = O\qty(\frac{(\log (n\log n/\epsilon))^2}{\delta \epsilon^2}).
    \label{eq:modular_flow_invocation}
\end{equation}

From Eq.~\eqref{eq:modular_flow_invocation}, we can immediately see that our algorithm for modular flow cannot be improved substantially. Indeed, suppose our algorithm's dependence on $\kappa$ is polylogarithmic. Note that it suffices to choose $\kappa = O(n\log n /\epsilon)$ to estimate the von Neumann entropy up to an additive error of $\epsilon$ (Lemma~\ref{lem:kappa-epsilon}). Using Eq.~\eqref{eq:modular_flow_invocation}, we see that the complexity of estimating the von Neumann entropy would scale polylogarithmically in $n$. However, there is a known query-complexity lower bound  (in the purified access model) that scales as $\Omega( (n/\epsilon)^{\frac{1}{2}} + (\log n)/\epsilon)$~\cite{chen2025listcomplexityboundsproperty}. As such, the poly-logarithmic dependence on $\kappa$ would violate this bound. (In fact, any sub-polynomial depndence will violate the bound as well.)

\section{Conclusions}
\label{sec:conclusions}

In this work, we have proved that the modular flow can be theoretically implemented with a query complexity of $\tilde O\paren{ \kappa \abs{t}\log  \kappa \log \frac{1}{\epsilon} }$, and that this query complexity is optimal up to polynomial factors.

We have presented an explicit prescription for constructing the modular Hamiltonian and implementing modular flow in the QSVT framework, and identified two concrete applications. In the context of AdS/CFT, the modular flow can be used to explore bulk physics in holographic quantum systems according to the Jafferis-Lamprou proposal. Our construction of the modular Hamiltonian allows for an alternate method of numerically determining the von Neumann entropy of a state, and this reduction allows us to prove a lower bound on any modular flow prescription.

In preparing this work, we believe that a third application, computing the modular commutator of a tripartite state as discussed in \cite{kim_modular_2022}, should be possible with our modular flow procedure. However, completing this argument rigorously seems to require a continuity bound on the modular commutator, which we have thus far been unable to prove.

First, exploring holographic quantum systems with the modular flow in the lab may require a large system size around $O(100)$ fermions (as per estimates from \cite{jafferis_traversable_2022}) and high connectivity between the qubits in order to probe non-trivial bulk physics. As of time of writing, this will require further advancements in experimental implementation to achieve.

Second, the modular flow proposal of Jafferis and Lamprou was developed in the context of a highly symmetric spacetime like the AdS Schwarzschild black hole and for a small code subspace (in the langauge of holographic error correction), and therefore may not generalize readily to other spacetimes of more physical interest.%
    \footnote{We thank Netta Engelhardt and Lisa Yang for pointing this out.}
Nevertheless, our protocol shows that for this restricted class of states, one can at least in principle efficiently test properties relating to observers in these highly symmetric spacetimes.

Finally, we have assumed oracle access to a block encoding of the state $\rho$ used in the modular flow $\rho^{it}$ in order to discuss the query complexity of our procedure. Access to such a block encoding is a standard assumption, but in practice, the states of interest may be difficult to prepare, particularly in the holographic case. It would be interesting to understand the hardness of preparing holographic states for which the modular flow can compute some interesting bulk quantity. We defer this to future work.

\section*{Acknowledgement}
IK acknowledges supports from NSF under award number PHY-2337931. Support for this work was provided by an Alfred P. Sloan Research Fellowship.

\appendix

\section{Proof of Lemma~\ref{lem:kappa-epsilon}}\label{app:kappa-epsilon_proof}
\begin{proof}
By assumption, the function $f^\mathrm{log}_{\kappa',\epsilon'}(x)$ obeys the inequality 
\begin{equation}\label{eq:tightlogbound}
    \abs{f^\mathrm{log}_{\kappa',\epsilon'}(x) - \log x} \leq \epsilon'
\end{equation}
on the interval $[-1,1]\setminus [-1/\kappa',1/\kappa']$. We can therefore split the contributions to $\Tr(\rho \log \rho)$ into the eigenvalues less than $1/\kappa'$ and the eigenvalues greater than $1/\kappa'$, and bound each of these contributions separately.

That is,
\begin{widetext}
\begin{align}
    \abs{\Tr(\rho \log \rho) - \Tr(\rho f_{\kappa',\epsilon'}^{\mathrm{log}}(\rho))} &= \abs{\sum_i \lambda_i \log \lambda_i - \sum_i \lambda_i f_{\kappa',\epsilon'}^{\mathrm{log}}(\lambda_i)}\\
        &\leq \sum_i \abs{\lambda_i \log \lambda_i - \lambda_i f_{\kappa',\epsilon'}^{\mathrm{log}}(\lambda_i)}\\
        &= S_\mathrm{small} + S_\mathrm{large}
\end{align}
where
\begin{equation}
    S_\mathrm{small} \equiv \sum_{\lambda_i \leq 1/\kappa'} \abs{\lambda_i \log \lambda_i - \lambda_i f_{\kappa',\epsilon'}^{\mathrm{log}}(\lambda_i)},
    \quad
    S_\mathrm{large} \equiv \sum_{\lambda_i > 1/\kappa'} \abs{\lambda_i \log \lambda_i - \lambda_i f_{\kappa',\epsilon'}^{\mathrm{log}}(\lambda_i)}
\end{equation}
\end{widetext}
and the first inequality comes from applying the triangle inequality. We can directly bound the second term using Eq.~\eqref{eq:tightlogbound}:
\begin{align}
    S_\mathrm{large} &= \sum_{\lambda_i > 1/\kappa'} \lambda_i \abs{ \log \lambda_i - f_{\kappa',\epsilon'}^{\mathrm{log}}(\lambda_i)} \leq \sum_{\lambda_i > 1/\kappa'} \lambda_i \epsilon' \leq \epsilon'.
\end{align}
Meanwhile, the first term can be bounded as follows. By construction, the function $f^\mathrm{log}_{\kappa',\epsilon'}(x)$ is bounded by $\log 2\kappa'$ on the whole interval $[-1,1]$, so
\begin{widetext}
\begin{equation}
    S_\mathrm{small} = \sum_{\lambda_i \leq 1/\kappa'} \lambda_i \abs{ \log \lambda_i - f_{\kappa',\epsilon'}^{\mathrm{log}}(\lambda_i)}
    \leq \sum_{\lambda_i \leq 1/\kappa'} \lambda_i \qty(\log \frac{1}{\lambda_i} + \log 2\kappa').
\end{equation}
\end{widetext}

Denote $\Delta \equiv \sum_{\lambda_i \leq 1/\kappa'} \lambda_i.$ Since each of these $\lambda_i$s is at most $1/\kappa'$ and there are at most $n$ such eigenvalues, we have the bound
\begin{equation}
    \Delta \leq \frac{n}{\kappa'}.
\end{equation}
We notice that
\begin{equation}
    \sum_{\lambda_i \leq 1/\kappa'} \frac{\lambda_i}{\Delta} \log \frac{1}{\lambda_i/\Delta} \leq \log n,
\end{equation}
where $n$ is the dimension (i.e. there are at most $n$ terms in the sum). This follows since $\qty{\lambda_i/\Delta}_{\lambda_i \leq 1/\kappa'}$ now defines a normalized probability distribution of size at most $n$, and so the LHS is just the Shannon entropy of this distribution.

We can rewrite the LHS of this inequality:
\begin{align}
    \sum_{\lambda_i \leq 1/\kappa'} \frac{\lambda_i}{\Delta} \log \frac{1}{\lambda_i/\Delta}
        &= \sum_{\lambda_i \leq 1/\kappa'} \frac{\lambda_i}{\Delta} \paren{\log \frac{1}{\lambda_i} + \log \Delta}\\
        &= \sum_{\lambda_i \leq 1/\kappa'} \frac{\lambda_i}{\Delta} \paren{\log \frac{1}{\lambda_i}} + \log \Delta,
\end{align}
and so
\begin{equation}
    \sum_{\lambda_i \leq 1/\kappa'} \lambda_i \log \frac{1}{\lambda_i} \leq \Delta \log n - \Delta \log \Delta.
\end{equation}
We find that
\begin{align}
    S_\mathrm{small} &\leq \Delta \log n - \Delta \log \Delta + \Delta \log 2\kappa'\\
        &\leq \frac{n}{\kappa'} \log n - \frac{n}{\kappa'} \log \frac{n}{2\kappa'^2},
\end{align}
when $\Delta \leq \frac{n}{\kappa'} \leq 1/e$.
As $\kappa'$ grows large, $1/\kappa'$ (and therefore the error quantified by $S_\mathrm{small}$) becomes small.

We want the total error to be $O(\epsilon)$:
\begin{widetext}
\begin{equation}
    \abs{\Tr(\rho \log \rho) - \Tr(\rho f_{\kappa',\epsilon'}^{\mathrm{log}}(\rho))} \leq S_\mathrm{small} + S_\mathrm{large} \leq \epsilon' + \frac{n}{\kappa'}\log n - \frac{n}{\kappa'} \log \frac{n}{2\kappa'^2}.
\end{equation}
\end{widetext}
If we choose $\epsilon'=\epsilon$ and $\kappa'> \frac{n\log n}{\epsilon}$, then the whole error is $O(\epsilon)$.

\end{proof}

\bibliography{mybibliography}

\end{document}